\DeclareRobustCommand{\greektext}{%
  \fontencoding{LGR}\selectfont\def\encodingdefault{LGR}}
\DeclareRobustCommand{\textgreek}[1]{\leavevmode{\greektext #1}}
\DeclareRobustCommand{\cyrtext}{%
  \fontencoding{T2A}\selectfont\def\encodingdefault{T2A}}
\DeclareRobustCommand{\textcyr}[1]{\leavevmode{\cyrtext #1}}
\newcommand{\lyxmathsym}[1]{\ifmmode\begingroup\def\b@ld{bold}
  \text{\ifx\math@version\b@ld\bfseries\fi#1}\endgroup\else#1\fi}
\begin{document}

\title{Earthquake forecasting: Statistics and Information}

\author{V.Gertsik {\normalsize{{}}}%
\thanks{ Institute of Earthquake Prediction Theory and Mathematical Geophysics
RAS, Moscow, RF, getrzik@ya.ru%
}, M.Kelbert{\normalsize{{}}}%
\thanks{Dept. of Mathematics, Swansea University, Singleton Park, Swansea,
SA2 8PP, UK. Institute of Earthquake Prediction Theory and Mathematical
Geophysics RAS, m.kelbert@swansea.ac.uk %
}, A.Krichevets{\normalsize{{}}}%
\thanks{Lomonosov MSU, Department of Psychology, Moscow, Russia, ankrich@mail.ru%
}}
\maketitle
\begin{abstract}
We present an axiomatic approach to earthquake forecasting in terms
of multi-component random fields on a lattice. This approach provides
a method for constructing point estimates and confidence intervals
for conditional probabilities of strong earthquakes under conditions
on the levels of precursors. Also, it provides an approach for setting
multilevel alarm system and hypothesis testing for binary alarms.
We use a method of comparison for different earthquake forecasts in
terms of the increase of Shannon information. 'Forecasting' and 'prediction'
of earthquakes are equivalent in this approach. 
\end{abstract}

\section{\textcolor{black}{Introduction}}

The methodology of selecting and processing of relevant information
about the future occurrence of potentially damaging earthquakes has
reached a reasonable level of maturity over the recent years. However,
the problem as a whole still lacks a comprehensive and generally accepted
solution. Further efforts for optimization of the methodology of forecasting
would be productive and well-justified.

A comprehensive review of the modern earthquake forecasting state
of knowledge and guidelines for utilization can be found in {[}Jordan
et all., 2011{]}. Note that all methods of evaluating the probabilities
of earthquakes are based on a combination of geophysical, geological
and probabilistic models and considerations. Even the best and very
detailed models used in practice are in fact only 'caricatures' of
immensely complicated real processes.

A mathematical toolkit for earthquake forecasting is well presented
in the paper {[}Harte and Vere-Jones, 2005{]}. This work is based
on the modeling of earthquake sequences in terms of the marked point
processes. However, the mathematical technique used is quite sophisticated
and does not provide direct practical tools to investigate the relations
of the structure of temporal-spatial random fields of precursors to
the appearance of strong earthquakes.

The use of the multicomponent lattice models (instead of marked point
processes) gives a different/novel way of investigating these relations
in a more elementary way. Discretization of space and time allows
us to separate the problem in question into two separate tasks. The
first task is the selection of relevant precursors, i.e., observable
and theoretically explained physical and geological facts which are
casually related to a high probability of strong earthquakes. Particularly,
this task involves the development of models of seismic events and
computing probabilities of strong earthquakes in the framework of
these models. Such probabilities are used as precursors in the second
task.

The second task is the development of methodology of working with
these precursors in order to extract the maximum information about
the probabilities of strong earthquakes. This is the main topic of
this paper.

Our approach allows us to obtain the following results:

$\bullet$ Estimates of probabilities of strong earthquakes for given
values of precursors are calculated in terms of the frequencies of
historic data.

$\bullet$ Confidence intervals are also constructed to provide reasonable
bounds of precision for point estimates.

$\bullet$ Methods of predictions (i.e., binary alarm announcement
{[}Keilis-Borok, 1996{]}, {[}Keilis-Borok, Kossobokov, 1990{]}, {[}Holiday
et all., 2007{]}) and forecasting (i.e., calculating probabilities
of earthquakes {[}Jordan et all., 2011{]}, {[}Kagan and Jackson, 2000{]},
{[}Harte and Vere-Jones, 2005{]}, {[}WGNCEP{]}) are equivalent in
the following sense: the setting of some threshold for probability
of earthquakes allows to update the alarm level. On the other hand,
the knowledge of the alarm domain based on historical data allows
us to evaluate the probabilities of earthquakes. In a sense, the prediction
is equivalent to hypothesis testing as well, see Section 11.

$\bullet$ In our scheme we propose a scalar statistic which is the
ratio of actual increment of information to the maximal possible increment
of information. This statistic allows us to linearly order all possible
forecasting algorithms. Nowadays the final judgement about the quality
of earthquake forecasting algorithms is left to experts. This arrangement
puts the problem outside the scope of natural sciences which are trying
to avoid subjective judgements.

The foundation of our proposed scheme is the assumption that the seismic
process is random and cannot be described by a purely deterministic
model. Indeed, if the seismic process is deterministic then the inaccuracy
of the forecast could be explained by the non-completeness of our
knowledge about the seismic events and non-precision of the available
information. This may explain, at least in principle, attacks from
the authorities addressed to geophysicists who failed to predict a
damaging earthquake. However, these attacks have no grounds if one
accepts that the seismic process is random. At the end of the last
century (February-April 1999) a group of leading seismologists organized
a debate via the web to form a collective opinion of the scientific
community on the topic: 'Is the reliable prediction of individual
earthquakes a realistic scientic goal?' (see http://www.nature.com/nature/debates/earthquake/).

Despite a considerable divergence in peripheral issues all experts
taking part in the debate agreed on the following main principles:

$\bullet$ the deterministic prediction of an individual earthquake,
within sufficiently narrow limits to allow a planned evacuation programme,
is an unrealistic goal;

$\bullet$ forecasting of at least some forms of time-dependent seismic
hazard can be justified on both physical and observational grounds.

The following facts form the basis of our agreement with this point
of view.

The string-block Burridge-Knopov model, generally accepted as a mathematical
tool to demonstrate the power-like Gutenberg-Richter relationship
between the magnitude and the number of earthquakes, involves the
generators of chaotic behaviour or dynamic stochasticity. In fact,
the nonlinearity makes the seismic processes stochastic: a small change
in the shift force may lead to completely different consequences.
If the force is below the threshold of static friction the block is
immovable, if the force exceeds this threshold it starts moving, producing
an avalanche of unpredictable size.

This mechanism is widespread in the Earth. Suppose that the front
propagation of the earthquake approaches a region of enhanced strength
of the rocks. The earthquake magnitude depends on whether this region
will be destroyed or remains intact. In the first case the front moves
further on, in the second case the earthquake remains localized. So,
if the strength of the rocks is below the threshold the first scenario
prevails, if it is above the threshold the second scenario is adapted.
The whole situation is usually labelled as a \textit{butterfly effect}:
infinitesimally small changes of strength and stress lead to macroscopic
consequences which cannot be predicted because this infinitesimal
change is below any precision of the measurement. For these reasons
determinism of seismic processes looks more doubtful than stochasticity.

The only comment we would like to contribute to this discussion is
that the forecasting algorithms based exclusively on the empirical
data without consistent physical models could hardly be effective
in practice (see Sections 12, 13 for more details).

In conclusion we discuss the problem of precursor selection and present
a theorem by A. Krichevetz stating that using a learning sample for
an arbitrary feature selection in pattern recognition is useless in
principle.

Finally, note that our approach may be well-applicable for the space-time
forecasting of different extremal events outside the scope of earthquake
prediction.

\section{Events and precursors on the lattice}

In order to define explicitly estimates of probabilities of strong
earthquakes we discretize the two-dimensional physical space and time,
i.e., introduce a partition of three-dimensional space-time into rectangular
cells with the space partition in the shape of squares and time partition
in the shape of intervals. These cells should not intersect to avoid
an ambiguity in computing the frequencies for each cell. In fact,
the space cells should not be perfect squares because of the curvature
of the earth's surface, but this may be neglected if the region of
forecasting is not too large.

So, we obtain a discrete set $\Omega_{K}$ with $N=I\times J\times K$
points which is defined as follows. Let us select a rectangular domain
$A$ of the two-dimensional lattice with $I\times J$ points $x=(x_{i},y_{j});\, x_{i}=a\text{\texttimes}i;\, i=1,\ldots,I$
and $y_{j}=a\text{\texttimes}j;\, j=1,\ldots,J,$$a$ is the step
of the lattice. A cell in $\Omega_{K}$ takes the shape of parallelepiped
of height $\lyxmathsym{\textgreek{D}}t$ with a square base. Clearly,
any point in $\Omega_{K}$ has coordinates $(x_{i},y_{j},t_{k}),\, t_{k}=t+k\Delta t;\, k=0,\ldots,K.$
.

We say that \emph{a seismic event} happens if an earthquake with magnitude
greater than some pre-selected threshold $M_{0}$ is registered, and
this earthquake is not foreshock or aftershock of another, more powerful
earthquake (we put aside a technical problem of identification of
foreshocks and aftershocks in the sequence of a seismic event). For
any cell in our space-time grid we define \textit{an indicator of
an event}, i.e., a binary function $h(i,j,k)$. This function takes
the value $1$ if at least one seismic event is registered in a given
cell and $0$ otherwise. Suppose that for all points $(x_{i},y_{j},t_{k})$
the value of a vector precursor $\mathbf{\boldsymbol{f}}(i,j,k)=\left\{ f_{q}(i,j,k),\, q=1,...,Q\right\} $
is given. The components of the precursor $f_{q}(i,j,k),\, q=1,...,Q$
are the scalar statistics constructed on the base of our understanding
of the phenomena that precede a seismic event.
\begin{remark}
Note that specifying an alarm domain as a circle with center at a
lattice site and radius proportional to the maximal magnitude of the
forecasted earthquakes leads to a contradiction. Indeed, suppose we
announce an alarm for earthquakes with minimal magnitude $6$ in a
domain $A_{6}$. Obviously, the same alarm should be announced in
the domain $A_{7}$ as well. By the definition $A_{6}\subset A_{7}$
and we expect an earthquake with magnitude at least $7$ and do not
expect an earthquake with magnitude at least $6$ in the domain $A_{7}\setminus A_{6}$.
But this is absurd.
\end{remark}

\section{Mathematical assumptions}

A number of basic assumptions form the foundation of the mathematical
tecnique of earthquake forecasting. In the framework of mathematical
theory they can be treated as axioms but are, in fact, an idealization
and simplification with respect to the description of the real phenomena.
Below we summarize the basic assumptions which are routinely used
in existing studies of seismicity and algorithms of earthquake forecasting
even the authors do not always formulate them explicitly.

We accept the following assumptions or axioms of the mathematical
theory:

\noindent (i) The multicomponent random process $\left\{ h(i,j,k),\mathbf{f}(i,j,k)\right\} $,
describing the joint evolution of the vector precursors and the indicator
of seismic events, is stationary.

This assumption provides an opportunity to investigate the intrinsic
relations between the precursors and the seismic events based on the
historical data. In other words, the experience obtained by analysing
the series of events in the past, is applicable to the future as the
properties of the process do not depend on time.

In reality, this assumption holds only approximately and for a restricted
time period. Indeed, plate tectonics destroys the stationarity for
a number of reasons including some purely geometrical considerations.
For instance, the movements of the plates leads to their collisions,
their partial destruction and also changes their shapes. Nevertheless,
the seismic process can be treated as quasi-stationary one for considerable
periods of time. At the time when the system changes one quasi-stationary
regime to another (say, nowadays, many researcher speak about the
abrupt climate change) the reliability of any prediction including
the forecast of seismic events is severely restricted.

\noindent (ii) The multicomponent random process $\left\{ h(i,j,k),\mathbf{f}(i,j,k)\right\} $
is ergodic.

Any quantitative characteristic of seismicity more representative
than a registration of an individual event is, in fact, the result
of averaging over time. For instance, the Gutenberg-Richter law, applied
to a given region relates the magnitude with the average number of
earthquakes where the averaging is taken over a specific time interval.
In order to associate with the time averaging a proper probabilistic
characteristic of the process and make a forecast about the future
one naturally needs the assumption of ergodicity. This exactly means
that any averaging over time interval $[0,T]$ will converge to the
stochastic average when $T\to\infty$. In view of ergodicity one can
also construct unbiased and consistent estimates of conditional probabilities
of strong earthquakes under conditions that the precursors take values
in some intervals. Naturally, these estimates are the frequencies
of observed earthquakes, i.e., ratios of the number of cells with
seismic events and prescribed values of precursors to the total number
of cells with the prescribed values of precursors. (Recall that an
unbiased point estimate ${\hat{\theta}}$ of parameter $\theta$ satisfies
the condition ${\bf E}{\hat{\theta}}=\theta$, and a consistent estimate
converges to the true value $\theta$ when the sample size tends to
infinity).

\noindent (iii) Any statement about the value of the indicator of
a seismic event $h(i,j,k)$ in the cell $(i,j,k)$ or its probability
should be based on the values of the precursor ${\bf f}(i,j,k)$ only.

This assumption means that the precursor in the given cell accumulates
all the relevant information about the past and the information about
the local properties of the area that may be used for the forecast
of the seismic event in this cell. In other words, the best possible
precursor is used (which is not always the case in practice). As in
the other cases, this assumption is only an approximation to reality,
and the quality of a forecast depends on the quality of the selection
and accumulation of relevant information in the precursors.

Below we present some corollaries and further specifications.

\noindent (iii-a) For any $k$ the random variables $h(i,j,k),\, i=1,...,I,\, j=1,\ldots,J$
are conditionally independent under the condition that the values
of any measurable function $u(\mathbf{\boldsymbol{f}}(i,j,k))$ of
the precursors ${\bf f}(i,j,k),\, i=1,\ldots,I,\, j=1,\ldots,J$ are
fixed..

In practice this assumption means that the forecast for the time $t_{k}=t_{0}+k\Delta t$
cannot be affected by the values related to the future time intervals
$(t_{k},t_{k}+\Delta t]$. In reality all of these events may be dependent,
but our forecast does not use the information from the future after
$t_{k}$.

\noindent (iii-b) The conditional distribution of the random variable
\emph{$h$ } at a given cell depends on the values of the precursors
at this cell \emph{$\boldsymbol{\mathbf{f}}$} and is independent
of all other variables.

\noindent (iii-c) The conditional probabilities $\mathrm{Pr}_{ij}\left\{ \mathit{h\mid u\mathrm{(}\boldsymbol{\mathbf{f}}\mathrm{)}}\right\} $
of the indicator of seismic events $h$ in the cell $(i,j,k)$, under
condition $u\mathrm{(}\boldsymbol{\mathbf{f}}\mathrm{)}$ in this
cell do not depend on the position of the cell in space (the time
index $k$ related to this probablity may be dropped due to the stationarity
of the process).

In other words, the rule for computing the conditional probability
$\mathrm{Pr}_{ij}\left\{ \mathit{h\mid u\mathrm{(}\boldsymbol{\mathbf{f}}\mathrm{)}}\right\} $
based on the values of precursors is the same for all cells, and the
space indices of probability $\mathrm{Pr}$ may be dropped. This condition
is widely accepted in constructions of the forecasting algoritms but
rarely formulated explicitly. However, the probability of a seismic
event depends to a large extent on the local properties of the area.
Hence, the quality of the forecasting depends on how adequately these
properties are summarized in the precursors. This formalism properly
describes the space inhomogenuity of the physical space because the
stationary joint distribution of $\mathrm{Pr}_{ij}\left\{ \mathit{h\text{,}f\mathrm{\leq x}}\right\} $
for an arbitrary precursor $f$ depends, in general, on the position
of the cell in the domain $A.$ Below we will use the distributions
of precursors and indicators of seismic events in domain $A$ that
do not depend on the spatial coordinates and have the following form

\[
\mathrm{Pr}_{A}\left\{ \mathit{h\text{,}f\mathrm{\leq x}}\right\} =\frac{1}{\mathit{I\cdot J}}\sum_{\left(\mathit{i,j}\right)\in A}\mathbf{\mathrm{Pr}_{\mathrm{\mathit{ij}}}}\left\{ \mathit{h\text{,}f\mathrm{\leq x}}\right\} ,
\]

\[
\mathrm{\mathit{P}_{\mathit{A}}(\mathit{x})\equiv}\mathbf{\mathrm{Pr}_{\mathit{A}}}\left\{ \mathit{f\mathrm{\leq\mathit{x}}}\right\} =\frac{1}{\mathit{I\cdot J}}\sum_{\left(\mathit{i,j}\right)\in A}\mathbf{\mathrm{Pr}_{\mathrm{\mathit{ij}}}}\left\{ \mathit{f\leq\mathit{x}}\right\} ,
\]

\[
\mathrm{\mathit{p_{A}\equiv}}\mathrm{Pr}_{A}\left\{ \mathit{h\mathrm{=1}}\right\} =\frac{1}{\mathit{I\cdot J}}\sum_{\left(\mathit{i,j}\right)\in A}\mathbf{\mathrm{Pr}_{\mathrm{\mathit{ij}}}}\left\{ \mathit{h=\mathrm{1}}\right\} ,
\]

\noindent (iii-d) Note that assumption (iii) implies that the conditional
probabilities ${\bf Pr}\left(h\vert u({\bf f})\right)$ are computed
via the probabilities $\mathit{\mathbf{\mathrm{P\text{r}_{\mathit{A}}}}}\left\{ \mathit{h\text{,}f\mathrm{\leq x}}\right\} $
only.

The properties listed above are sufficient to obtain the point estimates
for the conditional probabilities of seismic events under conditions
formulated in terms of the values of precursors. However, additional
assumption are required for a testing of the forecasting algorithm:

\noindent (iv) The random variables $\boldsymbol{\mathbf{f}}\mathrm{(}i,j,k),$
are conditionally independent under condition that $h(i,j,k)=1$.

Again, these conditions are not exactly true, however they may be
treated as a reasonable approximation to reality. Indeed, if the threshold
$M_{0}$ is sufficiently high than the strong earthquakes may be treated
as rare events, and the cells where they are observed are far apart
with a high probability. Any two events related to cells separated
by the time intervals $\Delta t$ are asymptotically independent as
$\Delta t\rightarrow\infty$ because the seismic process has decaying
correlations (the mixing property in the language of random processes).
The loss of dependence (or decaying memory) is related to the physical
phemonema such as healing of the defects in the rocks, relaxation
of strength due to viscosity, etc. As usual in physical theories,
we accept an idealized model of the real phenomena applying this asymptotic
property for large but finite intervals between localizations of seismic
events.

The independence of strong earthquakes is not a new assumption, in
the case of continuous space-time it is equivalent to the assumption
that the locations of these events form a Poisson random field.. (Note
that the distribution of strong earthquake should be homogeneous in
space, because there is no information about the heterogeneity a priori
.) The Poisson hypothesis is used in many papers, see, e.g. {[}Harte
and Vere-Jones , 2005{]}. It is very natural for the analysis of the
«tails» of the Gutenberg-Richter law for large magnitudes {[}Pisarenko
et al., 2008{]}. Summing up, the development of the strict mathematical
theory of earthquake forecasting does not require any additional assumption
except those routinely accepted in the existing algorithms but usually
not formulated explicitly.

\section{The standard form for precursors}

The correct solution of the forecasting problem given the values of
precursors $\mathbf{f}(i,j,k)=\big(f_{1}(i,j,k),\ldots,f_{Q}(i,j,k)\big)$
is provided by the estimate of conditional probability $\text{Pr\ensuremath{\left\{ \mathit{h\mid\mathbf{f}(i,j,k)}\right\} }}$
of the indicator of seismic event in the cell $(i,j,k)$. In practice
this solution may be difficult to obtain because the number of events
in catalog is not sufficient.

Indeed, the range of value of a scalar precursor is usually divided
into a number $M$ of intervals, and only a few events are registered
for any such interval. For a $Q$-dimensional precursor the number
of $Q$-dimensional rectangles, covering the range, is already $M^{Q}$,
and majority of them contains $0$ event. Only a small number of such
rectangles contains one or more events, that is the precision of such
an estimate of conditional probability is usually too low to have
any practical value.

For this reason one constructs a new scalar precursor in the form
of the scalar function of component of the vector precursor, and optimize
its predictive power. This approach leads to additional complication
as the units of measurement and the physical sense of different components
of precursor are substantially different. In order to overcome this
problem one uses some transformation to reduce all the components
of the precursor to a standard form with the same sense and range
of values.

Let us transform all the precursors $f_{q}(i,j,k),\, q=1,...,Q$ to
variables with the values in $[0,1]$ providing estimates of conditional
probabilities. So, after some transformation $F$ we obtain an estimate
of $\mathrm{Pr}\left\{ h=\mathrm{1}\mid u({\bf f}(i,j,k))=1\right\} $,
where $u$ is a characteristic function of some interval ${\bf B}$,
i.e., the probability of event $h(i,j,k)=1$ under condition that
this precursor takes the value ${\bf f}(i,j,k)\in{\bf B}$.

The transformation $F$ of a scalar precursor $f(i,j,k)$ is defined
as follows. Fix an arbitrary small number $\varepsilon$. Let $L$
be a number of cells $(i,j,k)$ such that $h(i,j,k)=1$, and $Z_{l},\, l=1,\ldots,L,$
be the ordered statistics, i.e., the values $f(i,j,k)$ in these cells
listed in non-decreasing order. Define a new sequence $z_{m},_{\,}m=0,...,M,$
from the ordered statistics $Z_{l}$ by the following recursion: $z_{0}=-\infty$,
$z_{m}$ is defined as the first point in the sequence $Z_{l}$, such
that $z_{m}-z_{m-1}\geq\varepsilon$. Next, construct the sequence
$z_{m}^{*}=z_{m}+(z_{m+1}-z_{m})/2,$$\, m=1,\ldots,M-1$, and add
the auxiliary elements $z_{0}^{*}=-\infty,\: z_{M}^{*}=\infty$. Define
also a sequence $n_{m},\, m=1,\ldots,M$, where $n_{m}$ equals to
the number of values in the sequence $Z_{l}$, such that $z_{m-1}^{*}\leq Z_{l}<z_{m}^{*}$.
Finally, define the numbers $N_{m},\, m=1,...,M$ counting all cells
such that $z_{m-1}^{*}\leq f(i,j,k)<z_{m}^{*},\, m=1,...M$. Observe
that $\sum_{m=1}^{M}n_{m}=L,\:\sum_{m=1}^{M}N_{m}=N$, and use the
ratios

\begin{equation}
\lambda=\frac{L}{N}\label{eq:7}
\end{equation}
as estimate of unconditional probability of a seismic event in a given
cell

\begin{equation}
p_{A}\equiv\mathrm{Pr_{\mathit{A}}}\left\{ \mathit{h\mathrm{(}i,j,k\mathrm{)}=\mathrm{1}}\right\} =\int_{-\infty}^{\infty}\mathrm{Pr}\left\{ \mathit{h=\mathrm{1\mid\mathit{x}}}\right\} dP_{A}(x).\label{eq:8}
\end{equation}

The transformation $F$ is defined as follows

\begin{gather}
g=Ff(i,j,k)=\frac{n_{m}}{N_{m}}\text{,}\:\text{if}\: z_{m-1}^{*}\leq f(i,j,k)<z_{m}^{*},\, m=1,\ldots,M.\label{eq:1}
\end{gather}

This definition implies that transformation $F$ replace the value
of precursor for the frequency, i.e., the ratio of a number of cells
containing a seismic event and the values of precursor from $[z_{m-1}^{*},z_{m}^{*})$
to the number of cells with the value of precursor in this range.
These frequencies are the natural estimates of conditional probabilities
$\mathrm{Pr}\left\{ \mathit{h=\mathrm{1\mid}z_{m-1}^{*}\leq f<z_{m}^{*}}\right\} ,\, m=1,\ldots,M$,
computed with respect to stationary distribution $P_{A}\mathrm{(}x\mathrm{)}$:

\begin{equation}
\mathrm{Pr}\left\{ \mathit{h=\mathrm{1\mid}z_{m-1}^{*}\leq f<z_{m}^{*}}\right\} =\frac{\int_{z_{m-1}^{*}}^{z_{m}^{*}}\mathrm{Pr}\left\{ \mathit{h=\mathrm{1\mid\mathit{x}}}\right\} dP_{\Omega}(x)}{\int_{z_{m-1}^{*}}^{z_{m}^{*}}dP_{A}(x)}.\label{eq:2}
\end{equation}
(This conditional probability can be written as $\mathrm{Pr}\left\{ \mathit{h=\mathrm{1\mid}u\mathrm{(}f)}\right\} ,$
where $u$ is the characteristic function of interval $[z_{m-1}^{*},z_{m}^{*})$).
The function $g$ has a stepwise shape, and the length of the step
in bounded from below by $\varepsilon$. It can be checked that there
exist the limit $\tilde{g}=\underset{\varepsilon\rightarrow0}{\lim}\underset{K\rightarrow\infty}{\lim}g=\mathrm{Pr}\left\{ \mathit{h=\mathrm{1\mid}f}\right\} .$

The estimates of conditional probabilities in terms of the function
$g$ are quite rough because typically the numbers $n_{m}$ are of
the order 1. As a final result we will present below more sharp but
less detailed estimates of conditional probabilities and confidence
intervals for them.

\section{Combinations of precursors}

There are many ways to construct a single scalar precursor based on
the vector precursor $(Ff_{q},\, q=1,\ldots,Q)$. Each such construction
inevitably contains a number of parameters or degrees of freedom.
These parameters (including the parameters used for construction of
the precursors themselves) should be selected in a way to optimize
the predictive power of the forecasting algorithm. The optimization
procedure will be presented below, its goal is to adapt the parameters
of precursors to a given catalog of earthquakes, that is to obtain
the best possible retrospective forecast. However, this adaptation
procedure creates a \textquotedbl{}ghost\textquotedbl{} information
related with the specific features of the given catalog but not present
in physical propertities of real seismicity. This \textquotedbl{}ghost\textquotedbl{}
information will not be reproduced if the algorithm is applied to
another catalog of earthquakes. It is necessary to increase the volume
of the catalog and to reduce the number of free parameters to get
rid of this \textquotedbl{}ghost\textquotedbl{} information.. Clearly,
the first goal requires the considerable increase of the observation
period and may be achieved in the remote future only. So, one concentrates
on the reduction of number of degrees of freedom. The simplest ansatz
including $Q-1$ parameters is the linear combination

\begin{equation}
f^{*}=Ff_{\text{1}}+\sum_{q=2}^{Q}c_{q-1}F\mathit{\mathit{f}_{q}}.\label{eq:3}
\end{equation}
As a strictly monotonic function of precursor is a precursor itself
the log-linear combination is an equally suitable choice

\begin{equation}
f^{*}=\ln\left(Ff_{\text{1}}\right)+\sum_{q=2}^{Q}c_{q-1}\ln\left(F\mathit{\mathit{f}_{q}}\right)\text{,}\label{eq:4}
\end{equation}
Here $c_{q},\, q=1,...,Q-1$ are free parameters. The result of the
procedure has the form $g=Ff^{*}.$

\section{Alarm levels, point and interval estimations }

In view of (\ref{eq:1}) the precursor $g$ is the set of estimates
for probabilities 
\[
\mathrm{Pr}\left\{ \mathit{h=\mathrm{1\mid}z_{l-1}^{*}\leq f(i,j,k)<z_{l}^{*}}\right\} ,\, l=1,...,L(f).
\]
Its serious drawback is that typically $\left\{ z_{l-1}^{*}\leq f(i,j,k)<z_{l}^{*}\right\} $
correspond to single events, and therefore the precision of these
estimates is very low (the confidence intervals discussed below may
be taken as a convenient measure of precision). In order to increase
the precision it is recommended to use the larger cells containing
a larger number of events, that is a more coarse covering of the space
where the precursor takes its values. In a sense, the precision of
the estimation and the localization of the precursor values in its
time-space region are related by a kind of \textquotedbl{}uncertainty
principle\textquotedbl{}: the more precise estimate one wants to get
the more coarse is the time-space range of their values and vice versa.

We adapt the following approach in order to achieve a reasonable compromise.

1. For fixed thresholds $a_{s},\, s=1,...,S+1,\: a_{1}=1,\: a_{s}<a_{s+1},\: a_{S+1}=0,$
we define $\text{\ensuremath{\mathit{S}}}$ possible alarm levels
$a_{s+1}\leq g(i,j,k)<a_{s}$ and subsets $\Omega_{s},\, s=1,...,S$,
of the set $\Omega_{K}$ corresponding to alarm levels, i.e., $\Omega_{s}$
is a set of cells of $\Omega_{K}$, such that $a_{s+1}\leq g(i,j,k)<a_{s}$

There are different ways to choose the number $S$ of alarm levels
and the thresholds $a_{s},\, s=2,...,S$. Say, fix $S=5$, and select
$a_{s}=10^{-\alpha(s-1)}$. This is a natural choice of the alarm
level because at $\alpha=1$ it corresponds to decimal places of the
estimate of the conditional probability given by the precursor. The
problem with $S=2$, i.e., two-level alarm, may be reduced to the
hypothesis testing and discussed in more details below.

2. Compute the point estimates $\theta_{\text{s}}$ of probabilities
$\mathrm{Pr}\left\{ \mathit{h=\mathrm{1\mid}a_{s+\mathrm{1}}\leq g\mathrm{(}i,j,k\mathrm{)}<a_{s}}\right\} ,$
$s=1,...,S$, obtained via the distribution $P_{\Omega}(x)$ of precursor
$g$ in the same way as in (\ref{eq:2}). The property (iv) implies
that for any domain $\Omega_{s}$ the binary random variables $h\mathrm{(}i,j,k\mathrm{)}$
are independent and identically distributed, i.e

\[
\mathrm{Pr}\left\{ \mathit{h=\mathrm{1\mid}a_{s+1}\leq g(i,j,k)<a_{s}}\right\} \equiv p_{s},
\]
and the unbiased estimate of $p_{s}$ takes the form

\begin{equation}
\theta_{s}=\frac{m_{s}}{n_{s}}\label{eq:9}
\end{equation}
where $n_{s}$ stands for the number of cells in domain $\Omega_{s}$,
and by $m_{s}$ we denote the number of cells in $\Omega_{s}$ containing
seismic events.

3. The random variable $m_{s}$ takes integer values between $0$
and $n_{s}$. The probabilities of these values are computed via the
well-known Bernoulli formula $\mathrm{Pr}(m_{s}=k)=\binom{n_{s}}{k}\, p_{s}^{k}(1-p_{s})^{n_{s}-k}$.
Let us specify the confidence interval covering the unknown parameter
$p_{s}$ with the confidence level $\gamma$. In view of the integral
Mouvre-Laplace theorem for $n_{s}$ large enough the statistics $\frac{(\theta_{s}-p_{s})\sqrt{n_{s}}}{\sqrt{p_{s}(1-p_{s})}}$
is approximately Gaussian N$(0,1)$ with zero mean and unit variance.
Note that the values $n_{s}$ increase with time. Omitting straightforward
calculations and replacing the parameter $p_{s}$ by its estimate
$\theta_{s}$ we obtain that $\theta_{s}^{-}<\theta_{s}\text{<\ensuremath{\theta_{s}^{+}}},$
where $\theta_{s}^{-}=\theta_{s}-\frac{t_{\gamma}\sqrt{\theta_{s}(1-\theta_{s})}}{\sqrt{n_{s}}}$,
$\theta_{s}^{+}=\theta_{s}+\frac{t_{\gamma}\sqrt{\theta_{s}(1-\theta_{s})}}{\sqrt{n_{s}}}$,
and $t_{\gamma}$ is the solution of equation $\Phi(t_{\gamma})=\frac{\gamma}{2}$.
Here $\Phi$ stands for the standard Gaussian distribution function.

4. As a result of these considerations we introduce 'the precursor
of alarms' which indicates the alarm level: $\mathbf{R}(\mathbf{f}(i,j,k))=s(i,j,k)$.
It will be used for calculations of point estimate and the confidence
inteval in the form $\{\theta_{s(i,j,k)}^{-}<\theta_{s(i,j,k)}<\theta_{s(i,j,k)}^{+}\}$.
This result will be use for prospective forecasting procedure.

\section{The information gain and the precursor quality}

The construction of a 'combined' precursor $\mathbf{R}$ involves
parameters from formula (\ref{eq:3}) or (\ref{eq:4}) as well as
parameters which appear in definition of each individual precursors
$f_{q}$. It is natural to optimize the forecasting algorithm in such
a way that the information gain related to the seismic events is maximal.
In one-dimensional case the information gain as a measure of the forecast
efficiency was first intoduced by Vere-Jones {[}Vere-Jones, 1998{]}.
Here we exploit his ideas in the case of multidimensional space-time
process.

Remind the notions of the entropy and information. Putting aside the
mathematical subtlety (see {[}Kelbert, Suhov, 2013{]} for details)
we follow below an intuitive approach of the book {[}Prohorov, Rozanov,
1969{]}. The information containing in a given text is, basically,
the length of the shortest compression of this text without the loss
of its content. The smallest length $S$ of the sequence of digits
$0$ and $1$ (in a binary code) for counting $N$ different objects
satisfies the relations $0\leq S-\log_{2}N\leq1$. So, the quantity
$S\approx\log_{2}N$ characterizes the shortest length of coding the
numbers of $N$ objects.

Consider an experiment that can produce one of $N$ non-intersecting
events $\mathit{\lyxmathsym{\textcyr{\char192}}}_{1},\lyxmathsym{\ldots},\mathit{\lyxmathsym{\textcyr{\char192}}_{N}}$
with probabilities $\mathit{q}_{1},\lyxmathsym{\ldots},\mathit{q}_{N}$,
respectively, $\mathit{q}_{1}+\lyxmathsym{\ldots}+\mathit{q_{N}}=1$.
A message informing about the outcomes of $n$ such independent identical
experiments may look as a sequence $(A_{i_{1}},\lyxmathsym{\ldots},A_{i_{n}})$,
where $A_{i_{k}}$ is the outcome of the experiment $k$. But for
long enough series of observations the frequency $n_{i}/n$ of event
$\mathit{\lyxmathsym{\textcyr{\char192}}}_{i}$ is very close to its
probability $\mathit{q}_{i}$. It means that in our message $(A_{i_{1}},\text{\ensuremath{\ldots}},A_{i_{n}})$
the event $\mathit{\lyxmathsym{\textcyr{\char192}}}_{i}$ appears
$n_{i}$ times. The number of such messages is

\[
N_{n}=\frac{n!}{n_{1}!...n_{N}!}.
\]
By the Stirling formula the length of the shortest coding of these
messages

\[
S_{n}\approx\log_{2}N_{n}\approx-n\sum_{i=1}^{N}q_{i}\log_{2}q_{i}.
\]
The quantity $S_{n}$ measures the uncertainty of the given experiment
before its start, in our case we are looking for one of possible outcomes
of $n$ independent trials. The specific measure of uncertainty for
one trial

\[
\frac{1}{n}S_{n}=\frac{1}{n}S_{n}(\mathit{q_{\mathrm{1}},\lyxmathsym{\ldots},q_{N}})=-\sum_{i=1}^{N}q_{i}\log_{2}q_{i}
\]
is known as Shannon's entropy of distribution $\mathit{q\mathrm{_{1}},\lyxmathsym{\ldots},q_{N}}$
(in physical literature it is also known as a measure of chaos or
disorder). After one trial the uncertainty about the future outcomes
decreases by the value $S=S_{n}-S_{n-1}$, this decrement equals to
the \emph{information gain} $I=S$, obtained as a result of single
trial.

The quantity

\begin{equation}
S(h)=-p_{A}\log_{2}p_{A}-(1-p_{A})\log_{2}(1-p_{A})\label{eq:6}
\end{equation}
is the (unconditional) entropy of distribution for indicator of seismic
event $h$ in a space-time cell in the absence of any precursors.
The conditional entropy $S(h\mid a_{s+1}\leq g<a_{s})$ under condition
that in the cell $(i,j,k)$ the alarm level $s$ is set up equals

\[
S(h\mid a_{s+1}\leq g<a_{s})=-p_{s}\log_{2}p_{s}-(1-p_{s})\log_{2}(1-p_{s})
\]
The expected conditional entropy $S_{\boldsymbol{\mathbf{R}}}(h)$
of indicator of seismic events where the averaging in taken by the
distribution of precursors $\mathrm{\mathbf{R}}$ takes the form

\begin{equation}
S_{\boldsymbol{\mathbf{R}}}(h)=-\sum_{s=1}^{S}\left[p_{s}\log_{2}p_{s}+(1-p_{s})\log_{2}(1-p_{s})\right]P_{A}(a_{s+1}\leq g<a_{s})\label{eq:5}
\end{equation}

We conclude that the knowledge of the precursor values helps to reduce
the uncertainty about the future experiment by $S(h)-S_{\mathrm{\mathbf{R}}}(h)$
which is precisely information $I(\mathrm{\mathbf{R}},h)$ obtained
from the precursor. Taking into account (\ref{eq:6}), (\ref{eq:5})
and the fact that

\[
p_{A}=\sum_{s=1}^{S}p_{s}P_{A}(a_{s+1}\leq g(i,j,k)<a_{s})
\]
we specify the information gain as

\begin{gather*}
I(\mathrm{\mathbf{R}},h)=\sum_{s=1}^{S}\left[p_{s}\log_{2}\frac{p_{s}}{p_{A}}+(1-p_{s})\log_{2}\frac{1-p_{s}}{1-p_{A}}\right]P_{A}(a_{s+1}\leq g<a_{s}).\\
\end{gather*}

By analogy with the one-dimensional case {[}Kolmogorov, 1965{]} the
quantity $I(\mathrm{\mathbf{R}},h)$ may the called the mutual information
about the random field $h$ that may be obtained from observations
of random field $\mathbf{R}$. It is known that the information $I(\mathrm{\mathbf{R}},h)$
is non-negative and equals to $0$ if and only if the random fields
$h$ and $\mathbf{R}$ are independent. This mutual information $I(\mathrm{\mathbf{R}},h)$
takes its maximal value $S(h)$ in an idealized case of absolutely
exact forecast. The mutual information quantifies the information
that the distributions of precursors contribute to that of the indicator
of seismic event. For this reason it may be considered as an adequate
scalar estimate for the quality of the forecast.

The quantity $I(\mathrm{\mathbf{R}},h)$ depends on the cell size,
i.e., on the space discretization length $a$ and time interval $\Delta t$.
We need a formal test to compare precursors defined for different
size of the discretization cells. For this aim let us introduce the
so-called 'efficiency' of precursors as the ratio of information gains

\[
r(\mathrm{\mathbf{R}},h)=\frac{I(\mathrm{\mathbf{R}},h)}{S(h)}.
\]
This efficiency varies between $0$ and $1$ and serves as a natural
estimate of information quality of precursors. It allows to compare
different forecasting algorithms and select the best one.

A natural estimate of $S(h)$ based on (\ref{eq:7}) and (\ref{eq:8})
is defined as follows

\begin{equation}
\hat{S}(h)=-\lambda\log_{2}\lambda-(1-\lambda)\log_{2}(1-\lambda).\label{eq:12}
\end{equation}
Taking into account (\ref{eq:9}) and using an estimate of $P_{A}(a_{s+1}\leq g<a_{s})$
in the form of ratio $\tau_{s}=\frac{n_{s}}{N},$ we construct an
estimate of $I(\mathrm{\mathbf{R}},h)$ as follows

\begin{equation}
\hat{I}(\mathrm{\mathbf{R}},h)=\sum_{s=1}^{S}\left[\theta_{s}\log_{2}\frac{\theta_{s}}{\lambda}+(1-\theta_{s})\log_{2}\frac{1-\theta_{s}}{1-\lambda}\right]\tau_{s},\label{eq:13}
\end{equation}

\begin{equation}
\hat{r}(\mathrm{\mathbf{R}},h)=\frac{\hat{I}(\mathrm{\mathbf{R}},h)}{\hat{S}(h)}.\label{eq:14}
\end{equation}

\begin{remark}
\textit{The economical quality of forecast.} A natural economic measure
for a quality of binary forecast is the economic risk or damage $r$
related to the earthquakes and the necessary protective measures.
In mathematical statistics the risk is defined as the expectation
of the loss function, in our case there are two types of losses: damage
and expenses related to protection. For each cell of our grid the
risk may be specified by the formula 
\[
\begin{array}{l}
r=\alpha\mathrm{Pr}\text{\{}\mathit{h}(\mathit{i},\mathit{j},\mathit{k})=1,\eta(\mathit{i},\mathit{j},\mathit{k})=0\}+\beta\mathrm{Pr}\{\mathit{h}(\mathit{i},\mathit{j},\mathit{k})=0,\eta(\mathit{i},\mathit{j},\mathit{k})=1\}+\\
+\gamma\mathrm{Pr}\{\mathit{h}(\mathit{i},\mathit{j},\mathit{k})=1,\eta(\mathit{i},\mathit{j},\mathit{k})=1\},
\end{array}
\]
here $\alpha$ stands for the average damage from a seismic event;
$\beta$ stands for the average expenses for protection after a seismic
alarm is announced; $\gamma$ stands for the average damage after
the alarm, $\gamma=\alpha+\beta-\delta$, where $\delta$ is the damage
prevented by the alarm. The coefficient in front of $\mathrm{Pr}\{\mathit{h}(\mathit{i},\mathit{j},\mathit{k})=0,\eta(\mathit{i},\mathit{j},\mathit{k})=0\}$,
obviously, equals $0$, because in the absence both of a seismic event
and an alarm there is no loss of any kind. Clearly, only the case
when $\delta>\beta$ is economically justified, i.e., the gain from
the prevention measures is positive. Obviously, $\delta$ should be
less than $\alpha+\beta$, i.e., an earthquake cannot be profitable.
Taking into account that $\alpha,\beta$ and $\gamma$ depend on the
geographical position of the cell, we write the total risk as the
summation over all cells in the region of a given forecast. In the
simplest case of the absence of the spacial component, when a single
cell represents a region of forecast, the expression for the risk
is simplified as follows $r=\alpha\lambda\nu+\beta\tau+\gamma\lambda(1-\nu).$
\end{remark}
However, the risk $r$, which is very useful for economical considerations
and as a basis for an administrative decision, could hardly be used
as a criteria for quality of seismic prediction. First of all, it
cannot be computed in a consistent way because the coefficients $\alpha,\:\beta$
and $\gamma$ are not known in practice, and hence no effective way
of its numerical evaluation is known. The computation of these coefficients
is a difficult economic problem and goes far beyond of the competence
of geophysicists. On the other hand, the readiness of the authority
to commit resources to solving the problem depends on the quality
of the geophysical forecast. This situation leads to a vicious cirle.

The second drawback of the economic risk as a criterion for the quality
of prediction is related to the fact that it depends on many factors
which have no relation to geophysics or earthquake prediction. These
factors inlude the density of population, the number and size of industrial
enterprises, infrastructure, etc. It also depends on subjective factors
such as the williness of authorities to use resources for prevention
of the damage from earthquakes. The natural sciences could hardly
accept the criteria for the forecast quality which depend on the type
of state organization, priorities of ruling parties, results of the
recent elections, etc. 
\begin{remark}
It seems reasonable to introduce a penalty related to the number of
superfluous parameters in evaluating the quality of forecast pointing
to the natural analogy with the Akaike test {[}Akaike, 1974{]} and
similar methods in information theory. In our context the main parameter
of importance is ${\hat{r}}({\bf R},t)$ and its limit as $t\to\infty$.
This quantity does not involve the number of parameters directly.
Probably, the rate of convergence depends on the number of parameters
but this dependence is not studied yet. 
\end{remark}

\section{The forecasting procedure }

The number of time intervals, i.e., the number of observation $N$
used in the construction of estimates increases with the growth of
observation time. So, the computation procedure requires constant
innovations. On the other hand some computation time is required to
'adapt' the model parameters to the updated information about seismic
events via an iterative procedure. For these reasons we propose the
following forecasting algorithm.

1. Given initial parameter values at the moment $t_{K-1}=t_{0}+(K-2)\Delta t$
we optimize them to obtain the maximum of efficiency $\hat{r}(\mathrm{\mathbf{R}},h)$
of precursor in domain $\Omega_{K-1}.$ For this aim the Monte-Carlo
methods is helpful: one perturbs the current values of parameters
randomly and adapts the new values if the efficiency increases. The
process continues before the value of efficiency stabilized, this
may give a local maximum, so the precedure is repeated sufficient
number of times. The choice of initial value on the first step of
optimization procedure is somewhat arbitrary but a reasonable iteration
procedure usually leads to consistent results. The opmization procedure
takes the period of time $t_{K-1}<t\leq t_{K}$ .

2. Next, we construct the forecast in the following way. At the moment
$t_{K}$ the values of precursor $\hat{g}$ in each cell $(i,j,K+1)$
is computed with optimized parameters. Based on these parameter values
the alarm levels, the point estimates and confidence intervals are
computed in each cell as well as the values of efficiency of precursors.

3. The estimates of stationary probabilities of seismic events in
the cell $\bar{\theta}(i,j)$ are defined as follows:

\[
\bar{\theta}(i,j)=\frac{1}{K}\sum_{k=1}^{K}\theta_{s(i,j,k)}.
\]
they can be used for creation of the of\textcolor{black}{{} the variant
of the m}aps of seismic hazard in the region.

\section{Retrospective and prospective informativities}

The efficiency of precursor which is achieved as a result of parameters
optimization could be considered as \emph{retrospective} as it is
constructed by the precursors adaptation to the historical catalogs
of seismic events. The prospective efficiency for the space-time domain
$\Omega^{*}$ containing the cell in the 'future' is based on the
forecast. It is computed via formulas (\ref{eq:12}), (\ref{eq:13}),
(\ref{eq:14}) with the only difference that domain $\Omega_{\mathit{s}}^{*}$
consists from the cells where the forecasted alarm level is $s$.
The efficiency of prospective forcast is smaller compared with the
retrospective efficiency, however approaches this value with time.
In principle, the prospective efficiency is an ultimate criteria of
precursors quality and the retrospective efficiency could serve only
for the preliminary selection of precursors and their adaptation to
the past history of seismic events.

\section{Testing of the forecasting algorithm }

The efficiency of precursor could be computed exactly only in an idealized
case of infinite observation time. However, its estimate may be obtained
based on the observation over a finite time interval. So, if an estimate
produces a non-zero value not necessarily the real effects is present.
It may be simply a random fluctuation even if the precursor provides
no information about the future earthquake. For this reason we would
like to check the hypothesis $H_{0}$ about the independence of a
precursor and an event indicator with a reasonable level of confidence.
In case the hypothesis is rejected one have additional assurance that
the forecasting is real, not just a \textquotedbl{}ghost\textquotedbl{}.

So, consider the distributions

\[
\mathit{P}_{A}(\mathit{x})=\frac{1}{\mathit{I\cdot J}}\sum_{\left(\mathit{i,j}\right)\in\Omega}\mathrm{P}\mathrm{r}_{ij}\left\{ \mathit{g\mathrm{(}i,j,k\mathrm{)\leq\mathit{x}}}\right\} ,
\]
and 
\[
\mathit{P_{A}^{\prime}}(\mathit{x})=\frac{1}{\mathit{I\cdot J}}\sum_{\left(\mathit{i,j}\right)\in A}\mathrm{P}\mathrm{r}_{ij}\left\{ \mathit{g\mathrm{(}i,j,k\mathrm{)\leq\mathit{x}}\mid}h\mathrm{(}i,j,k\mathrm{)=1}\right\} 
\]
The function $\mathit{P}_{A}(\mathit{P}_{A}^{-1}(y))=y$ of variable
$y=\mathit{P}_{A}(x)$ provides an uniform distribution $F^{*}(y)=\mathrm{Pr}\big({\xi\leq y}\big)$
of some random variable $\xi$ on {[}0,1{]}. Next, consider a distribution
function $G(y)=P_{A}^{\prime}(\mathit{P}_{A}^{-\mathrm{1}}(y))$ on
{[}0,1{]}, and use a parametric representation for abcissa $\mathit{P}_{A}(\mathit{x})$
and ordinate $\mathit{P}_{A}^{\prime}(\mathit{x})$. If random fields
$g$ and $h$ are independent the distribution functions $\mathit{P}_{A}(\mathit{x})=\mathit{P}_{A}^{\prime}(\mathit{x})$
and $G(y)$ are uniform. So, the hypothesis about the absence of forecasting,
i.e., about the independence of $g$ and $h$, is equivalent to the
hypothesis $H_{0}$ that the distribution $G(y)$ is uniform.

The empirical distribution $G_{L}(y)$ related to $G(y)$ is defined
as follows. Denote by $u_{l},l=1,...L$ the values of the function
$g(i,j,k)$ sorted in the non-decreasing order and beloning to the
cells where $h\mathrm{(}i,j,k\mathrm{)=1.}$ Let $n_{l}$ be the numbers
of cells such that $h\mathrm{(}i,j,k\mathrm{)=1,}\, g(i,j,k)=u_{l}.$
Denote by $m(u_{l})$ the numbers of cells from $\Omega$ such that
$g(i,j,k)<u_{l}$, and define the empirical distribution $G_{L}(y)$
as a step-wise function with $G_{L}(0)=0$ and positive jumps of the
size $\frac{n_{l}}{L}$ at points $y_{l}=\frac{m(u_{l})}{N}$.

The well-known methods of hypothesis testing requires that the function
$G_{L}(y)$ has the same shape as for independent trials, i.e., random
variables $u_{l},l=1,...L$ are independent in view of axiom (iv).
Naturally, we accept the precursors such that the hypothesis $H_{0}$
is rejected with the reasonable level of confidence. (Remind, that
the hypothesis is accepted if and only if its logical negation could
be rejected based on the available observations. The fact that the
hypothesis cannot be rejected does not mean at all that it should
be accepted, it only means that the available observations don't contradict
this hypothesis. Say, the well-known fact that \textquotedbl{}The
Sun rise in the East\textquotedbl{} does not contradict to our hypothesis,
however it may not be considered as a ground for its acceptance.)
For large values of $L$ the Kolmogorov statistics {[}Kolmogorov,
1933a{]} is helpful for this aim

\[
D_{L}=\sup\mid G_{L}(y)-y\mid
\]
with an asymptotic distribution

\[
\underset{L\rightarrow\infty}{\lim}\mathrm{Pr\left\{ \mathit{\sqrt{L}D_{L}\leq z}\right\} =\sum_{\mathit{k}=-\infty}^{\infty}}\left(-1\right)^{k}\mathit{e^{-2k^{2}z^{2}},\, z>\mathrm{0},}
\]
or Smirnov's statistics {[}Smirnov, 1939{]}

\[
D_{L}^{+}=\sup\left[G_{L}(y)-y\right],
\]

\[
D_{L}^{-}=-\inf\left[G_{L}(y)-y\right],
\]
with asymptotic distribution

\[
\underset{L\rightarrow\infty}{\lim}\mathrm{Pr\left\{ \mathit{\sqrt{L}D_{L}^{+}\leq z}\right\} =\underset{\mathit{L}\rightarrow\infty}{\lim}\mathrm{Pr\left\{ \mathit{\sqrt{L}D_{L}^{-}\leq z}\right\} =1-\mathit{e^{-2z^{2}},\, z>\mathrm{0.}}}}
\]
The asymptotic expressions for these statistics can be used for $L>20$
({[}Bolshev, Smirnov, 1965{]})..

\section{The binary alarm and the hypothesis testing}

The prediction is the form of forecast when an alarm is announced
in a given cell without a preliminary evaluation of probability of
seismic event. In this case we can estimate the probabilities of events
too. (If the alarm is announced in an arbitrary domain $\Omega$ we
set up an alarm if at least \textcolor{black}{haph of the cell of
our model is occupied by al}arm.).

Let $M$ be the number of cells in $\Omega$ which are in the state
of alarm, $M_{0}$ be the number of cells where the seismic event
is present but no alarm was announced (the number of 'missed targets').
Denote by $\tau=\frac{M}{N}$ the share of the cells with alarm announced,
$\lambda=\frac{L}{N}$ the share of the cells with seismic events,
and $\nu=\frac{M_{0}}{M}$ the share of missed targets. Let a random
variable $\eta(i,j,k),$ equal $1$ if an alarm is announced in the
cell $(i,j,k)$, and $0$ otherwise. Obviously, the estimate of conditional
probability $\mathrm{Pr\left\{ \mathit{h}(\mathit{i},\mathit{j},\mathit{k})=1\mid\eta(\mathit{i},\mathit{j},\mathit{k})=1\right\} }$
of the seismic event under the condition of alarm is $\frac{\lambda(1-\nu)}{\tau},$
and the estimate of conditional probability $\mathrm{Pr\left\{ \mathit{h}(\mathit{i},\mathit{j},\mathit{k})=1\mid\eta(\mathit{i},\mathit{j},\mathit{k})=0\right\} }$
of the seismic event under the condition of no alarm is $\frac{\lambda\nu}{1-\tau}.$

If the alarm is announced according to the procedure described in
Section 5 the threshold $a_{1}$ specifying the acceptable domain
of values for $g(i,j,k)$ should be treated as a free parameter and
selected by maximizing the information efficiency $\hat{r}(\eta,h)$.
The estimate of information increase for given values of $\tau$ and
$\nu$ equals

\begin{gather*}
\hat{I}(\eta,h)=\lambda(1-\nu)\log_{2}\frac{1-\nu}{\tau}+\lambda\nu\log_{2}\frac{\nu}{1-\tau}+\\
+\left[\tau-\lambda(1-\nu)\right]\log_{2}\frac{\tau-\lambda(1-\nu)}{(1-\lambda)\tau}+\left(1-\tau-\lambda\nu\right)\log_{2}\frac{1-\tau-\lambda\nu}{(1-\lambda)(1-\tau)}.
\end{gather*}

The value of $\eta(i,j,k)$ characterizes the results of checking
two mutually exclusive simple hypothesis:

$H_{0}$: the distribution of $\tilde{g}\mathrm{(}i,j,k\mathrm{)}$
has the form $\mathrm{\mathit{P}_{\mathit{A}}^{0}(\mathit{x})\equiv}$
$P\mathrm{r}_{\mathit{A}}\left\{ \mathit{\tilde{g}\mathrm{(}i,j,k\mathrm{)\leq\mathit{x}\mid\mathit{h}\mathit{\mathrm{(}i\mathrm{,}j\mathrm{,}k}\mathrm{)=0}}}\right\} $,
implying 'no seismic events',\\
 or

$H_{1}:$ the distribution of $\tilde{g}\mathrm{(}i,j,k\mathrm{)}$
has the form $\mathrm{\mathit{P}_{\mathit{A}}^{1}(\mathit{x})\equiv}$
$P\mathrm{r}_{\mathit{A}}\left\{ \mathit{\tilde{g}\mathrm{(}i,j,k\mathrm{)}\mathrm{\leq\mathit{x}\mid\mathit{h}\mathit{\mathrm{(}i\mathrm{,}j\mathrm{,}k}\mathrm{)=1}}}\right\} $,
implying the presence of seismic event.\\

Statistics for checking of these hypothesis is the precursor $g\mathrm{(}i,j,k\mathrm{)}$,
and the critical domain for $H_{0}$ has the form $\left\{ g\mathrm{(}i,j,k\mathrm{)}\geq a_{1}\right\} $.
(If usual method of alarm announcement is used the relevant precursor
plays the rôle of statistics and the critical domain is defined by
the rule of the alarm announcement). The probability of first type
error 
\[
\alpha=\mathrm{Pr\left\{ \mathit{\eta}(\mathit{i},\mathit{j},\mathit{k})=1\mid\mathit{h}(\mathit{i},\mathit{j},\mathit{k})=0\right\} ,}
\]
it is estimated as $\frac{\tau-\lambda(1-\nu)}{1-\lambda}$. The probability
of second type error 
\[
\beta=\mathrm{Pr\left\{ \mathit{\eta}(\mathit{i},\mathit{j},\mathit{k})=0\mid\mathit{h}(\mathit{i},\mathit{j},\mathit{k})=1\right\} ,}
\]
it is estimated as $\nu.$ (Note that due to condition (iii) any test
used for the checking these hypothesis should not depend on the coordinates
of the cell).

The Neyman-Pearson theory allows to define the domain \textcolor{black}{of
images}\textcolor{red}{{} }of all possible criteriaall possible criteria:
in coordinates $(\alpha,\beta)$ it is a convex domain with a boundary
$\Gamma$ which corresponds to the set of uniformly most powerful
tests. This family may be defined in terms of the likelihood ratio
$\Lambda(x)=\frac{\mathit{p}_{\mathit{A}}^{1}(\mathit{x})}{\mathit{p}_{\mathit{A}}^{0}(\mathit{x})}$
under condition that the distributions $\mathit{P}_{\mathit{A}}^{1}(\mathit{x})$
and $\mathit{P}_{\mathit{A}}^{0}(\mathit{x})$ has densities $\mathit{p}_{\mathit{A}}^{1}(\mathit{x})$
and $\mathit{p}_{\mathit{A}}^{0}(\mathit{x})$: 
\[
\begin{cases}
\eta(i,j,k\mathrm{)}=1\:\mathrm{if\:}\Lambda(x)>\omega,\\
\eta(i,j,k\mathrm{)}=0\:\mathrm{if\:}\Lambda(x)<\omega
\end{cases}
\]
where $\omega$ denotes the threshold. In the paper {[}Gercsik, 2004{]}
we demonstrated that among all the tests with the images on the boundary
$\Gamma$ there exists three different best tests. Here the term \textquotedbl{}best\textquotedbl{}
may be understood in three different sense, i.e., maximizing the variational,
correlational and informational efficiency. The most relevant criteria
is the informational efficiency ${\hat{r}}(\eta,h)$.

The well-known Molchan's error diagram {[}Molchan, 1990{]} where the
probability of the first kind error is estimated by $\tau$ is constructed
in the same way. However, it involve a comparison of two intersecting
hypothesis:

$H_{0}$: the distribution $\tilde{g}\mathrm{(}i,j,k\mathrm{)}$ has
the form $\mathrm{\mathit{P}_{\mathit{A}}^{0}(\mathit{x})\equiv P}\mathrm{r}_{\mathit{A}}\left\{ \mathit{\tilde{g}\mathrm{(}i,j,k\mathrm{)}\mathrm{\leq\mathit{x}}}\right\} $,
i.e., the seismic event could \textquotedbl{}either happen or not
happen\textquotedbl{}, and

$H_{1}:$ the distribution $\tilde{g}\mathrm{(}i,j,k\mathrm{)}$ has
the form $\mathrm{\mathit{P}_{\mathit{A}}^{1}(\mathit{x})\equiv P}\mathrm{r}_{\mathit{A}}\left\{ \mathit{\tilde{g}\mathrm{(}i,j,k\mathrm{)}\leq\mathit{x}\mid\mathit{h}\mathit{\mathrm{(}i\mathrm{,}j\mathrm{,}k}\mathrm{)=1}}\right\} $,
i.e., the seismic event \textquotedbl{}will happen\textquotedbl{}

\noindent .Note that the rejection of hypothesis $H_{0}$ leads to
absurd results.
\begin{remark}
\noindent In the paper {[}Molchan and Keilis-Borok, 2008{]} the area
of the alarm domain is defined in terms of non-homogeneous measure
depending on the spacial coordinates, in terms of our paper it may
be denoted as ${\bar{\theta}(i,j)}$. i.e., $\tau\backsim\sum_{i,j}\bar{\theta}(i,j)\eta(i,j,k\mathrm{)}$.
This approach is used to eliminate the decrease of the share of alarmed
sites $\tau$ with the extension of the domain when a purely safe
and aseismic territory is included into consideration. It would be
well-justified if the quantity $\tau$ could be accepted as an adequate
criterion of the quality of forecast in its own right. On the other
hand, it can be demonstrated that the information efficiency ${\hat{r}}(\eta,h)$
converges to a non-zero value $1-\nu$ when the number of cells with
an alarm is fixed but the total number of cells tends to infinity.
An inhomogeneous area of the territory under forecast which is proportional
to $\bar{\theta}(i,j)$ does not enable us to calculate the informational
efficiency. Moreover, it possesses a number of unnatural features
from the point of view of evaluation of economical damage. A seismic
event in the territory of low seismicity is more costly because no
precautions are taken to prevent the damage of infrastructure. However,
in this inhomogeneous area an alarm announced in an aseismic territory
will have a smaller contribution than an alarm in a seismically active
territory where the losses would be in fact smaller. We conclude that
this approach 'hides' the most costly events and does not provide
a reasonable estimate of economic damage.
\end{remark}

\section{The choice of precursors}

We use the term 'empiric precursor of earthquake' for any observable
characterisric derived from the catalog\textcolor{black}{{} only which
provides for this catalog a reasonable retrospective forcast of seismic
events and not derived from basic physical conception of seismicity
(say, the periods of relatice calm, deviation of some basic characteristic
from a long-time average , etc). In contrast, the physical precursors
are de- rived from some of physical processe and characterize physical
quantities (stress fields, strength, concentration of cracks, etc.)
or well-defined physical processes (}i.e., phase transitions, cracks
propagations, etc.) In the meteorological forecast the danger of using
empirical precursors was highlighted by A. Kolmogorov in 1933 {[}Kolmogorov,
1933{]}. From that time the meteorological forecast relies on the
physical precursors which are theoretically justified by the models
of atmospheric dynamics. Below we will present A. Kolmogorov's argument
adapted to the case of seismic forecast. This demonstrates that the
purely empirical precursors work well only for the given catalog from
which they are derived. However, their eficiency deteriorates drastically
when they are applied to any other independent catalog. 

Consider a group of $k$ empirical precursors used for a forecast
and and selected from a set of $n$ such groups. According to A. Kolmogorov's
remark the number $k$ is typically rather small. This is related
to the fact that a number of strong earthquakes in catalog is unlikely
to exceed a few dozen. As the values of precursors are random there
exists a small probability $p$ that the efficiency of the forecast
exceeds the given threshold $\lyxmathsym{\textcyr{\char209}}$. Then
the probability of event $\hat{r}(\mathrm{\mathbf{R}},h)\text{\ensuremath{\le}}\text{\textcyr{\char209}}$
equals $1\lyxmathsym{\textendash}p$, and the probability of event
$\hat{r}(\mathrm{\mathbf{R}},h)>\text{\textcyr{\char209}}$ for at
least one collection of precursors equals $P=1-(1-p)^{n}$ and tends
to $1$ as $n\rightarrow\infty$.. (According to Kolmogorov some arbitrariness
of the assumption of independence is compensated by the large number
of collections.)

Summing up, if the number of groups is large enough with probability
close to 1 it is possible to find a group giving an effective retrospective
forecast for a given catalog. In practice this is always the case
as the number of empirical precursors could be increased indefinitely
by variation of real parameters used in their construction. It is
important to note that for such a group, which is highly eficient
for the initial catalog, the probability that the eficiency is greater
than $C$ is still equal to $p$ for any other catalog. In other words
the larger the number of the groups of empirical precursors the less
reliable forecast is. So, the collection of a large list of the empirical
precursors is counter-productive. 

Much more reliable are the physical precursors intrinsicly connected
with the physical processes which preserve their values with the change
of sample. The probability to find such a set of precursors by pure
empirical choice is negligible because they are very rare in the immense
collection of all possible precursors.

\section{Image identification}

The possibility to use the pattern recognition formalism in seismic
forecast is totally based on the acceptance of deterministic model
of seismicity. It is necessary to assume that in principle there exits
such a group of precursors which allows to determine with certainly
whether a strong earthquake will happen or not. In this case one believes
that all random errors are related to the incompleteness of this set
of precursors.But if the seismicity is a random process then the image
appears only after the earthquake and before it any set of values
for precursors cannot guarantee the possible outcome and only the
relevant probabilities may be a subject of scientific study. After
the discovery of dynamic instability and generators of stochastic
behavior of dynamical systems the deterministic model of seismicity
is cast in doubts. Its potential acceptance requires substantial evidence
which hardly exist at present.

In any case the results of pattern recognition procedure (i.e., a
binary alarm) are useful if they are considered alongside with the
results of statistical tests. They allows to calculate the estimate
of probabilities of seismic events and informational efficiency.

However, the section of 'features' for pattern recognition leads to
the same difficulties as the selection of precursors: the 'features'
based on the observations only and not related to the physics of earthquakes
are not helpful, and any hopes for 'perceptron education' are not
grounded. A successful supervised recognition is possible if the features
has proved causal relation with pattern. This principle is illustrated
by a simple but important theorem by A.N. Krichevets.
\begin{theorem}
Let $A$ be a finite set, $B_{1},B_{2}\subset A,B_{1}\cap B_{2}=\emptyset$.
We say that $B_{1}$ and $B_{2}$ are finite educational samples.
Let $X\in A$, $X\notin B_{1}\cup B_{2}$ be a new object. Then among
all classifications, i.e., subsets $(A_{1},A_{2})$ such that $B_{1}\subset A_{1}$,
$B_{2}\subset A_{2}$, $A_{1}\cup A_{2}=A$, $A_{1}\cap A_{2}=\emptyset$
satisfying condition that either $B_{1}\cup X\subset A_{1}$ or $B_{2}\cup X\subset A_{2}$
exactly a half classifies $X$ as an object of sample $B_{1}$ and
a half classifies $X$ as an object from $B_{2}$. \end{theorem}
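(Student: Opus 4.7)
The plan is to prove the theorem by a direct counting argument via a bijection. A classification $(A_1, A_2)$ is a partition of $A$ that extends the fixed partition $B_1 \sqcup B_2$ of the training part. Since the positions of the elements in $B_1$ and $B_2$ are prescribed, the only freedom lies in how to distribute the remaining elements, namely those in $C := A \setminus (B_1 \cup B_2)$, between $A_1$ and $A_2$. This immediately identifies the set of admissible classifications with the set of functions $\varphi\colon C \to \{1,2\}$, giving a total count of $2^{|C|}$.

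Now I would single out $X \in C$ and split $C = \{X\} \cup C'$, where $C' = C \setminus \{X\} = A \setminus (B_1 \cup B_2 \cup \{X\})$. Each admissible classification corresponds to a pair consisting of a function $\psi\colon C' \to \{1,2\}$ (the assignment of the remaining elements) and a value $\varphi(X) \in \{1,2\}$ (the label of $X$). Thus the set of admissible classifications factors as a Cartesian product of size $2^{|C'|} \cdot 2$, and the projection $\varphi \mapsto \varphi(X)$ is perfectly balanced: exactly $2^{|C'|}$ classifications place $X$ in $A_1$, and exactly $2^{|C'|}$ place $X$ in $A_2$.

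To make the balance fully explicit I would exhibit the obvious involution $\sigma$ on the set of admissible classifications: $\sigma$ fixes the label of every element of $B_1 \cup B_2 \cup C'$ and flips the label of $X$. Then $\sigma$ is a bijection between the classifications with $X \in A_1$ and those with $X \in A_2$, which proves the claim with no enumeration required.

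There is no real obstacle here beyond carefully separating the data $(B_1, B_2, X)$ from the ambient set $A$ and noting that the admissibility conditions $B_1 \subset A_1$, $B_2 \subset A_2$, $A_1 \sqcup A_2 = A$ constrain only the labels on $B_1 \cup B_2$, leaving $X$ and the remaining points of $A$ entirely free. The content of the theorem is conceptual rather than technical: it says that, in the absence of any additional structure relating $X$ to $B_1$ or $B_2$ (a metric, a similarity, a physical model), the training samples by themselves supply no preference whatsoever for either class, and therefore arbitrary feature-based learning without causal grounding cannot classify a new object better than a coin toss.
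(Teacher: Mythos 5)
Your proof is correct and uses essentially the same argument as the paper: the involution you exhibit at the end (flip the label of $X$, leaving everything else fixed) is exactly the paper's bijection $(A_{1},A_{2})\mapsto(A_{1}\setminus\{X\},\,A_{2}\cup\{X\})$ between classifications placing $X$ with $B_{1}$ and those placing it with $B_{2}$. Your preliminary identification of classifications with maps $A\setminus(B_{1}\cup B_{2})\to\{1,2\}$ is a harmless elaboration that the paper omits.
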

\begin{proof}
It is easy to define a one-to-one between classifications. Indeed,
if $\left\{ A_{1},A_{2}\right\} ,$ $A_{1}\cup A_{2}=A,$ is a classification
such that $B_{1}\subset A_{1},\, X\subset A_{1},\, B_{2}\subset A_{2},$
one maps it into the unique classification $\left\{ A_{1}^{\prime},A_{2}^{\prime}\right\} $
such that $B_{1}\subset A_{1}^{\prime},\, X\subset A_{2}^{\prime},\, B_{2}\subset A_{2}^{\prime},$
where $A_{1}^{\prime}=A_{1}\setminus X,\, A_{2}^{\prime}=A_{1}\cup X.$ \end{proof}
\begin{corollary}
A supervised pattern recognition is impossible. After the leaning
procedure the probability to classify correctly a new object is the
same as before leaning, i.e., $1/2$.
\end{corollary}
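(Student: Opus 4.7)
The plan is to exhibit a simple involution on the set of admissible classifications that swaps the side containing $X$, and then conclude by the standard parity argument that the two classes of classifications have equal cardinality.

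First I would set notation. Call a classification a pair $(A_1,A_2)$ with $A_1 \cup A_2 = A$, $A_1 \cap A_2 = \emptyset$, $B_1 \subset A_1$, $B_2 \subset A_2$. Partition the admissible classifications (those satisfying the additional constraint from the statement) into two disjoint families:
\[
\mathcal{F}_1 = \{(A_1,A_2) : X \in A_1\}, \qquad \mathcal{F}_2 = \{(A_1,A_2) : X \in A_2\}.
\]
The goal is to show $|\mathcal{F}_1| = |\mathcal{F}_2|$.

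Next I would define the map $\Phi : \mathcal{F}_1 \to \mathcal{F}_2$ by $\Phi(A_1,A_2) = (A_1 \setminus \{X\},\, A_2 \cup \{X\})$, exactly as in the statement of the theorem. The crucial verification is that $\Phi(A_1,A_2)$ really lies in $\mathcal{F}_2$: since $X \notin B_1 \cup B_2$ by hypothesis, removing $X$ from $A_1$ does not violate $B_1 \subset A_1 \setminus \{X\}$, and adding $X$ to $A_2$ preserves $B_2 \subset A_2 \cup \{X\}$; the partition property is obviously maintained. Symmetrically one defines $\Psi : \mathcal{F}_2 \to \mathcal{F}_1$ by $\Psi(A_1,A_2) = (A_1 \cup \{X\},\, A_2 \setminus \{X\})$, and one checks $\Psi \circ \Phi = \mathrm{id}$ and $\Phi \circ \Psi = \mathrm{id}$, so $\Phi$ is a bijection.

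From the bijection the conclusion is immediate: the total number of admissible classifications is $|\mathcal{F}_1| + |\mathcal{F}_2| = 2|\mathcal{F}_1|$, so exactly half assign $X$ to the $B_1$-side and exactly half to the $B_2$-side, which is what the theorem asserts. There is no real obstacle here; the argument is a one-line involution, and the only subtle point is to notice that the hypothesis $X \notin B_1 \cup B_2$ is precisely what is needed for $\Phi$ to land in the admissible family. The corollary then follows at once, since in the absence of any additional structure the training samples $B_1,B_2$ induce no preference between the two halves of the bijection.
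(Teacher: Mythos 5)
Your proof is correct and takes essentially the same route as the paper: the paper's own argument for the underlying theorem is exactly this one-line bijection moving $X$ across the partition, and the corollary is then stated as an immediate consequence with no further argument. Your version is in fact slightly more careful --- you verify that $X\notin B_1\cup B_2$ is what makes the map land in the admissible family, and you write the inverse correctly, whereas the paper's displayed formula $A_{2}^{\prime}=A_{1}\cup X$ contains a typo for $A_{2}\cup X$.
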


\section{Demonstration of algorithm}

A preliminary version of the forecast algorithm described above was
used in the paper {[}Ghertzik, 2008{]} for California and the Sumatra-Andaman
earthquake region. These computations serve as a demonstration of
the efficiency of the method but their actual results should be taken
with a pinch of salt because the selection of precursors does not
appear well-justified from the modern point of view: the number of
free parameters to be adapted in the precursor `\textquotedbl{}stress
indicator\textquotedbl{} is too large. 

\textbf{\textcolor{black}{Califormia region. }}\textcolor{black}{The
catalog Global Hypocenter Data Base CD-ROM NEIC/USGS, Denver, CO,
1989, together with data from the site NEIC/USGS PDE (ftp://hazard.cr.usgs.gov)
for earthquakes with magnitudes $M\ge4.0$ with epicenters between
$113^{\circ}-129^{\circ}$ of western longitude and $31^{\circ}-43^{\circ}$
of northern latitude was used for parameter adaptation. The initial
time $t_{0}$ was selected by subtracting from the time of actual
computation, $08.03.2006$, an integer number of half-year intervals
such that $t_{0}$ fits the first half of the year 1936. (The final
time $08.03.2006$ could be considered as an initial moment for constructing
half-year forecast forward up to the date of the latest earthquake
available in the catalog). During the computation the time interval
from the first half of 1936 to the first half of 1976 was used for
relaxation of the zero initial data used for precursors. After this
date the catalog for the earthquakes with magnitude $M\ge4.0$ was
used to estimate the probabilities of strong earthquakes with magnitude
$M\ge6.0$ up to the moment of actual forecast. Note that the adapted
restriction to include into consideration only earthquakes with magnitudes
$M\geq4.0$ is a severe restriction. It decreases the precision of
precursor computation and therefore, if a prediction is successful,
increases the degree of confidence to the predictor choice. We choose
$a=150$km as a size of the spacial lattice, and $\Delta t=6$ months
as a time-step. Retrospective forecast was performed with $5$ alarm
levels defined by the thresholds $a_{s}=10^{-\alpha(5-s)},s=1,\ldots,4$
and $\alpha=0.75$. (Due to too short time step no alarms was registered
on the lowest level when parameter $\alpha=1$ was selected). In order
to reduce the influence of the boundary conditions the large square
covering all the seismic events in the catalog used in the computations
was reduced by two layers of elementary cells from each boundary.
As a result of optimization the forecast information efficiency of
$0.526$ was achieved, i.e., the forecasting algorithm applied to
the given catalog extracted from it about $53\%$ of all available
information about seismic events. It seems that this result could
be only partially explained by a lucky selection of precursors: another
contributor to the high efficiency of the algorithm is the adaptation
of the parameters to the features of the specific catalog. The influence
of this artificial information may be reduced only with the increase
of the observation interval.}

\noindent \textcolor{black}{Accepting the rule of binary alarm announcement
in the cells from group 1 and 2 from 5 levels possible one obtains
that the space-time share of alarmed cells is $3.4\%$ and the share
of missed targets is $18.2\%$. This result is comparable with the
best forecasts available in the literature and obtained by other methods
(in the cases when the quantitive parameters of algorithms are presented
in the publications). When the forecast was constructed in the future
we obtained that the estimate of probability of a strong earthquake
anywhere in the area under study is $0.174$, and the maximal point
estimate of an event in any individual cell is $0.071$. As a whole
the seismic situation in California did not look too alarming. Indeed,
there were no strong earthquakes in California in the next half-year.}

\textbf{\textcolor{black}{SAE region}}\textcolor{black}{. We have
conducted a retrospective forecast of strong earthquakes with magnitudes
$M\ge7.0$ for the whole region where the Sumatra-Andoman earthquake
(SAE) happened on $26.12.2004$ with magnitude $M=9.3$. The catalog
Global Hypocenter Data Base CD-ROM NEIC/USGS, Denver, CO, 1989, together
with the data from the cite NEIC/USGS PDE (ftp://hazard.cr.usgs.gov)
for earthquakes with magnitudes $M\ge5.5$ with epicenters between
$84.3^{\circ}-128^{\circ}$ of eastern longitude and $20^{\circ}-26^{\circ}$
of northern latitude was used for the parameters adaptation. The initial
moment of time $t_{0}$ was selected by subtracting from the time
of actual computation, $10.11.2004$, an integer number of half-year
intervals such that $t_{0}$ fits the first half of the year 1936.
(The final time was selected in such a way that the next half-year
period covers SAE and its powerful aftershock). During the computation
the time interval from the first half of 1936 to the first half of
1976 was used for relaxation of the zero initial data used for precursors.
After this date the catalog was used to estimate the probabilities
of strong earthquakes with magnitude $M\ge7.0$ up to the moment of
actual forecast. (For magnitude $M\ge7.5$ the number of seismic events
was not sufficient for reliable forecast because the $5\%$-confidence
intervals strongly overlapped). In this case the restriction to include
into consideration only earthquakes with magnitudes $M\geq5.5$ was
adapted. As before, it decreases the precision of precursor computation
and therefore, if the prediction is successful, increases the degree
of confidence to the predictor choice. We selected the size of the
spacial grid as $a=400$km and the size of time-step $\Delta t=$half-year.
Retrospective analysis was conducted following the same scheme as
in the previous case. In order to reduce the influence of the boundary
conditions the large square covering all the seismic events in the
catalog used in the computations was reduced by two layers of elementary
cells from each boundary. As a result of optimization the forecast
information efficiency was $0.549$, i.e., the forecasting algorithm
extracted around $55\%$ of all available information about seismic
events when applied to the given catalog. }

\textcolor{black}{In the case of binary alarm announcement the space-time
share of alarmed cells was $3.1\%$ and the share of missed targets
was $8.3\%$. This result is comparable with the best forecasts available
in the literature and obtained by the other methods (in the cases
when the quantitive parameters of algorithms are presented in the
publications). In the case of forward forecast the two most powerful
earthquakes, i.e., SAE and its major aftershock, happened in the alarm
zone of the second level, and two other events with smaller magnitudes
in the fourth alarm zone. Note that in case of binary alarm announcement
one would register a square with 9 elementary cells with only one
alarmed and 8 quiet cells. In this case no reliable forward forecast
is possible.}
\begin{acknowledgements}
\textcolor{black}{\normalsize{We would like to thank V.Pisarenko and
G.Sobolev for stimulating discussions that gave us the impulse for
writing this paper.}}{\normalsize \par}\end{acknowledgements}


\begin{thebibliography}{Keilis-Borok, Kossobokov, 1990}
\bibitem[Akaike, 1974]{Akaike, 1974} Akaike, H., A new look at the
statistical model identification, IEEE Trans. Automatic Control, \textbf{19},
n. 6, 716-723, 1974.

\bibitem[Bolshev, Smirnov, 1965 ]{ Bolshev, Smirnov, 1965 } Bolshev
L. N., Smirnov V. N., Mathematical Statistical Tables (in Russian).
V.A. Steklov Matematical Institute, Academy of Sciences, Moscow, USSR,
1965, 464 pp.

\bibitem[Field, 2007]{Field, 2007}Field, E. H., Overview on the working
group for the development of Regional Earthquake Likelihood Models
(RELM), Seismol. Res. Lett. 7â \textbf{16}, 2007.

\bibitem[Ghertzik, 2008]{Ghertzik, 2008} Gercsik V., Physical concepts
of fracture and prediction of probabilities of strong earthquakes,
Phys. Solid Earth, \textbf{44}. n. 3, 22-39. 2008.

\bibitem[Gercsik, Kelbertl, 2004]{Gercsik, Kelbertl, 2004} Gercsik
V., Kelbert M., On comparision of hypothesis tests in Bayesian framework
without a loss function, Journ. Modern Applied Statistical Methods,
\textbf{3}, n.2, 399-405. 2004.

\bibitem[Harte and Vere-Jones, 2005]{Harte and Vere-Jones , 2005}
Harte D. and Vere-Jones D., The entropy score and its uses in earthquake
fore- casting, Pure and Applied Geophysics. \textbf{162}, n. 6-7,
1229-1253, 2005.

\bibitem[Holliday et al., 2007]{Holliday et al., 2007} Holliday J.R.,
Chien-chih Chen, Tiampo K.F., Rundle J.B., Turcott D.L. and Donnellan
A., A RELM Earthquake forecast based on pattern informatics, Seismological
Research Lett., \textbf{78}, n.1., 97-93, 2007.

\bibitem[Jackson, 1996 ]{Jackson, 1996 } Jackson D. D., Hypothesis
testing and earthquake prediction, Proc. Natl. Acad. Sci. USA, \textbf{93},
3772-3775, 1996.

\bibitem[Jordan et all., 2011]{Jordan et all., 2011} Jordan T.H.,
Chen Y.-T., Gasparini P., Madariaga R., Main I., Marzocchi W., Papdopoulos
G., Sobolev G., Yamaoka K. and Zschau J., Operational earthquake forecasting:
state of knowledge and guidelines for utilization, Ann. Geophysics,
\textbf{54}, n. 4, 2011.

\bibitem[Kagan and  Jackson 2000]{ Kagan and  Jackson 2000} Kagan
Y. Y., and Jackson D.D., Probabilistic forecasting of earthquakes,
Geophysical Journal International, \textbf{143}, 438â 1453, 2000.

\bibitem[Keilis-Borok.,  1996]{Keilis-Borok,  1996} Keilis-Borok
V. I., Intermediate-term earthquake prediction, Proceedings of the
National Academy of Sciences of the United States of America, \textbf{93},
n. 9, 3748-3755, 1996.

\bibitem[Keilis-Borok, Kossobokov, 1990]{ Keilis-Borok, Kossobokov, 1990}
Keilis-Borok V.I., Kossobokov V.G. , Premonitory activation of earthquake
flow: algorithm M8. Physics of the Earth and Planetary Interiors,
\textbf{61}. 73-83, 1990.

\bibitem[Kelbert, Suhov, 2013]{Kelbert, Suhov, 2013} Kelbert M.,
Suhov Y., Information Theory and Coding by Example. Cambridge Univ.
Press: Cambridge, 2013, 530 pp.

\bibitem[Kolmogorov, 1965 ]{Kolmogorov, 1965 } Kolmogorov A. N.,
Three approaches to the definition of the notion of infor- mation
amount, Probl. Information Transmission, \textbf{1}, n. 1, 3-11, 1965.

\bibitem[Kolmogorov, 1933]{Kolmogorov, 1933} Kolmogorov A. N., On
the suitability of statistically obtained prediction formulas, Zh.
Geoz., \textbf{3}, 7882, 1933.

\bibitem[Kolmogorov, 1933a]{Kolmogorov, 1933a} Kolmogorov A. N.,
Sulla determinazione empirica di una legge distribuzione, G. Ist.
Ital. Attuari., 4 (1), 83-91, 1933.

\bibitem[Molchan, 1990]{Molchan, 1990} Molchan G. M., Strategies
in strong earthquake prediction, Phys. Earth Plan. Int., \textbf{61},
84-98, 1990.

\bibitem[Molchan and Keilis-Borok, 2008]{Molchan and Keilis-Borok, 2008}
Molchan, G. M. and Keilis-Borok V. I., Earthquake prediction: probabilistic
aspect, Geophys. Journ. Int., \textbf{173}, 1012-1017, 2008.

\bibitem[Pisarenko et al., 2008]{Pisarenko et al., 2008} Pisarenko
V. F., Sornette A., Sornette D., Rodkin M. V., New approach to the
characterization of Mmax and of the tail of the distribution of earthquake
magnitudes, Pure and Applied Geophysics, \textbf{165}, n 5 , 847-888,
2008

\bibitem[Prohorov, Rozanov, 1969]{Prohorov, Rozanov, 1969} Prohorov
Yu.V., Rozanov Yu.A., Probability theory, basic concepts. Limit theorems,
random processes, Springer-Verlag, 1969, 401 pp.

\bibitem[Smirnov, 1939]{Smirnov, 1939} Smirnov N. V., On deviations
from the empirical distribution curve, Mat. Sb., 6(48), 1, 3-24, 1939.

\bibitem[Vere-Jones, 1998]{Vere-Jones, 1998} Vere-Jones D., Probability
and information gain for earthquake forecasting, in: Problems of Geodynamics
and Seismology, Computational Seismology, Issue 30, Moscow, 248-263,
1998.

\bibitem[WGNCEP, 2011]{]}{WGNCEP, 2011} WGNCEP (Working Group on
Northern California Earthquake Potential), The Uniform California
Earthquake Rupture Forecast, Version 3 (UCERF3) Project Plan : U.
S. Geological Survey Open-File Report 1-176, 2011.\end{thebibliography}
\end{document}